\tikzset{snake it/.style={decorate, decoration=snake}}
 \newtheorem{definition}{Definition}
 \newtheorem{lemma}{Lemma}
 \newtheorem{proposition}[]{Proposition}
  \newtheorem{observation}[]{Observation}
 \newtheorem{theorem}[]{Theorem}
 \newcommand{\dist}{\texttt{dist}}
 \newcommand{\Oish}{\widetilde{O}}
\title{Simple Linear-Size Additive Emulators\footnote{This work was supported by NSF:AF 2153680.}}
\author{
\begin{tabular}{c}
    Gary Hoppenworth  \\
   University of Michigan EECS   \\
    \texttt{garytho@umich.edu} \\
\end{tabular}
}
\date{}
\begin{document}

\maketitle

\begin{abstract}
Given an input graph $G = (V, E)$, an additive emulator $H = (V, E', w)$ is a sparse weighted graph that preserves all distances in $G$ with small additive error. A recent line of inquiry has sought to determine the best additive error achievable in the sparsest setting, when $H$ has a linear number of edges. In particular, the  work of [Kogan and Parter, ICALP 2023], following [Pettie, ICALP 2007],  constructed linear size emulators with $+O(n^{0.222})$ additive error. It is known that the worst-case additive error must be at least $+\Omega(n^{2/29})$  due to [Lu, Vassilevska Williams, Wein, and Xu, SODA 2022].

We present a simple linear-size emulator construction that achieves additive error $+O(n^{0.191})$. 
Our approach extends the path-buying framework developed by [Baswana,  Kavitha,  Mehlhorn, and  Pettie, SODA 2005] and [Vassilevska Williams and Bodwin, SODA 2016] to the setting of sparse additive emulators. 


\end{abstract}

\section{Introduction}
Spanners and emulators are well-studied graph objects which aim to approximately preserve distances in the input graph metric $G$, while reducing the number of edges in the graph representation.
In particular, a spanner $H$ of an input graph $G$ is a sparse subgraph that  approximately preserves distances in $G$. Emulators are a natural generalization of spanners that allow $H$ to be any weighted graph on the same vertex set as $G$. Spanners and emulators have applications in many areas of computer science, including fast graph algorithms \cite{ chen2022maximum, dalirrooyfard2022approximation},   circuit design \cite{CKRS90, CKRS92}, and distributed algorithms \cite{Awerbuch85, BCLR86}.

There are several ways to formalize the manner in which a spanner or emulator approximately preserves distances in $G$, such as multiplicative spanners \cite{ADDJS93} or sublinear additive emulators \cite{thorup2006spanners}. Perhaps the most optimistic formalization requires that distances in $G$ are preserved up to a small purely additive error term. Emulators with purely additive error are called additive emulators, and they will be the focus of this paper.




\begin{definition} 
For a graph $G = (V, E)$, a graph $H = (V, E', w)$ is a $+k$ additive emulator of $G$ if, for all vertices $s, t$, we have $\dist_G(s, t) \leq \dist_H(s, t) \le \dist_G(s, t) + k$.
\end{definition}

Additive spanners were introduced in \cite{ACIM99}, where it was proved that every $n$-vertex graph admits a $+2$ additive spanner of size $O(n^{3/2})$. Later it was shown in \cite{dor2000all} that $+4$ emulators of size $O(n^{4/3})$ can be obtained. Unfortunately, the existence of polynomially sparser emulators with constant additive error was ruled out by \cite{AB17jacm}, which proved that in general, emulators with $O(n^{4/3 - \epsilon})$ edges suffered $+n^{\Omega(1)}$ additive error.

Consequently, sparse emulators must suffer polynomial additive error. However, it has remained open precisely what polynomial additive error is achievable for emulators of size $O(n^{4/3 - \epsilon})$. A particularly interesting setting is when the emulator $H$ is as sparse as possible, i.e., $H$ is of linear size.  The first linear-size additive emulator was given implicitly in \cite{pettie2007low} with additive error $+\Oish(n^{1/4})$. More recently, the existence of linear-size emulators with $+O(n^{0.222 - o(1)})$ additive error was established in \cite{kogan2023new}. We present a new linear-size emulator construction that achieves additive error roughly $+O(n^{0.191})$.

\begin{theorem}
For any $\epsilon > 0$, every $n$-vertex graph has a $+O(n^{\frac{1}{3 + \sqrt{5}}+ \epsilon})$ additive emulator on $O_{\epsilon}(n)$ edges. 
\end{theorem}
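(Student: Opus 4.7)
My plan is to adapt the path-buying framework of Baswana--Kavitha--Mehlhorn--Pettie and Vassilevska Williams--Bodwin from the spanner setting to the emulator setting. The crucial added freedom is that an emulator may install a single weighted shortcut edge $(u,v)$ of weight $\dist_G(u,v)$; this lets a long subpath of $G$ be represented by one edge in $H$, which is what will push the achievable additive error below the $+O(n^{0.222})$ of Kogan--Parter down to $\alpha = 1/(3+\sqrt{5}) = (3-\sqrt{5})/4$.

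\textbf{Step 1 (Initial emulator).} I build a clustering $\mathcal{C}$ of $V$ with cluster radius $r = n^{a}$ and minimum cluster size $\Omega(r)$, so that $|\mathcal{C}| = O(n/r)$. The initial emulator $H_0$ consists of a BFS tree of each cluster together with one bridging edge per pair of neighboring clusters, and by construction $|H_0| = O_{\epsilon}(n)$ for a suitable choice of $a$.

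\textbf{Step 2 (Critical pairs and path buying).} I flag as \emph{critical} those pairs $(s,t)$ whose shortest path $\pi_G(s,t)$ projects via $\proj_\mathcal{C}$ onto a long sequence of clusters---the pairs whose distance is not yet controlled by $H_0$. Iterating through critical pairs in an order to be specified (typically by increasing length), I decide whether to \emph{buy} each pair by adding either one weighted edge realizing $\dist_G(s,t)$ or a short sequence of weighted shortcut edges between the cluster centers appearing along $\coord(\pi_G(s,t))$. Following the BKMP/VWB accounting, the buying rule admits a purchase only when it resolves many still-unresolved critical pairs per added edge, which forces the total number of bought edges to be $O_\epsilon(n)$ by amortization.

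\textbf{Step 3 (Stretch analysis).} For an arbitrary target pair $(s,t)$, I argue in cases. If $\pi_G(s,t)$ stays within a few clusters, then routing $s \to t$ through cluster trees and bridging edges of $H_0$ incurs additive error $O(n^{a})$. If it is long and critical and was bought, the distance is exact. If it is long and critical but was \emph{rejected} by the buying rule, then the structural consequence of rejection---a Vassilevska Williams--Bodwin-style lemma tailored to the emulator setting---forces $\pi_G(s,t)$ to share a long cluster overlap with some previously bought path, so concatenating the bought shortcut with tree segments at the endpoints recovers $\dist_G(s,t)$ with additive error $O(n^\alpha)$.

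The dominant obstacle is the rejected-critical-pair case in Step 3. I will need a structural lemma asserting that any long unbought shortest path must overlap with a previously bought path across $\Omega(n^{b})$ clusters, where $b$ is the criticality threshold exponent; the slack in transferring between $\pi_G(s,t)$ and that overlapping bought shortcut then has to be absorbed into the cluster radius $O(n^{a})$. Simultaneously tuning the cluster exponent $a$, the criticality threshold $b$, and the buying-rule threshold so as to minimize the additive error subject to the linear-size constraint produces a system of inequalities whose optimum is a quadratic whose positive root is precisely $\alpha = 1/(3+\sqrt{5})$; verifying that this optimum is actually attained by some valid parameter choice, and that the various charging arguments survive the extra $\epsilon$ slack, is the most delicate part of the proof.
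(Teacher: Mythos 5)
There is a genuine gap: your proposal never explains where the exponent $\frac{1}{3+\sqrt{5}}$ would actually come from, and the single-level construction you describe cannot produce it. In the paper this exponent arises as the fixed point of a \emph{recursion}: the construction is a lemma that converts any linear-size emulator with error $+O(n^{\alpha+\epsilon})$ into one with error $+O(n^{1/(6-4\alpha)+\epsilon})$, by recursively installing the weaker emulator inside every \emph{large} cluster of the decomposition; iterating from a spanning tree gives the recurrence $a_{i+1} = 1/(6-4a_i)$, whose fixed point is the quadratic root $\frac{1}{3+\sqrt{5}} = \frac{3-\sqrt{5}}{4}$. Your Step 1 builds only BFS trees inside clusters, with no recursive bootstrapping, so a single round of clustering plus path-buying of the kind you describe plateaus around $+\Oish(n^{1/4})$ (essentially Pettie's bound) or worse; asserting that ``tuning $a$, $b$, and the buying threshold'' yields a quadratic with root $\frac{1}{3+\sqrt{5}}$ is precisely the step that needs the recursion, and it is missing.

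The second missing ingredient is the treatment of \emph{small} clusters, which is what makes the charging argument give $\Omega(n)$ newly satisfied pairs per purchased edge. The paper samples a hitting set $V'$ with probability $\Theta(\log n / r)$ and adds exact-distance emulator edges between all sampled pairs inside each cluster of size at most $r^2/\log^2 n$; this costs only $O(|C|)$ edges per cluster and lets the analysis assume that every cluster actually contributing additive error has $\Omega(r^2/\log^2 n)$ vertices (quadratic expansion). Your clusters have minimum size only $\Omega(r)$, so when you lower-bound the number of pairs resolved by a bought edge, each cluster along the prefix and suffix of the path contributes only $\Omega(r)$ vertices rather than $\Omega(r^2)$, and the product $|S|\cdot|T|$ falls short of $\Omega(n)$ for the parameter range needed. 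Separately, your rejected-pair analysis in Step 3 (a VWB-style overlap lemma) is harder than necessary in the emulator setting: since a single weighted edge $(x,y)$ realizes $\dist_G(x,y)$ exactly, one can simply run the greedy loop until \emph{no} violating pair remains and charge each added edge to the $\Omega(n)$ pairs it newly connects, avoiding any structural lemma about unbought paths. But the rejected-pair issue is cosmetic; the absence of the recursion on large clusters and of the quadratic-expansion sampling are the substantive gaps.
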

\noindent

Lower bounds on the additive error of linear-size emulators were initiated in \cite{HP18}, which established that $+\Omega(n^{1/18})$ additive error is necessary in general. This was subsequently improved to $+\Omega(n^{2/29})$ in \cite{LVWX22}. However, a significant gap remains between the best known upper and lower bounds on the additive error of linear-size emulators.

\section{Technical Overview}
The prior linear-size emulator construction with $+O(n^{0.222 - o(1)})$  additive error of \cite{kogan2023new} made use of a clever discretization of a weighted-variant of Thorup-Zwick emulators. Since Thorup-Zwick emulators require superlinear space, the construction in \cite{kogan2023new} inserted the modified TZ emulator over a subsampled net of the input graph. We diverge from this approach, instead returning to the clustering and path-buying strategies used to construct additive spanner upper bounds in \cite{BKMP05x, BV16, tan2023almost}.

Our emulator construction begins with  a graph clustering decomposition of \cite{BV16} that is standard in the area. This clustering decomposes the input graph $G$ into a collection of clusters $C_1, \dots, C_k$, each of radius $\leq r$, with certain `nice' covering properties. (See Lemma \ref{lem:clustering_lem} for details.) We  will construct a linear-size emulator $H$ of $G$ with additive error $+r$, where $r>0$ is an integer parameter to be optimized in our construction. 

 Each cluster $C_i$ in our clustering  is categorized as either being \textit{small} if $C_i$ contains fewer than $|C_i| \leq O(r^2)$ vertices or \textit{large} otherwise. We will  handle each cluster $C_i$ based on its classification: 
\begin{itemize}
    \item If cluster $C_i$ is small, then we will exactly preserve  paths passing through the cluster in our emulator $H$ using a simple sampling scheme.
    \item If cluster $C_i$ is large, then we will  approximately preserve paths passing through the cluster in our emulator $H$ by recursively inserting a linear-size emulator of $C_i$ into $H$.
\end{itemize}

The key technical development  that allows us to apply the path-buying method successfully is our sampling scheme for handling small clusters. After adding a small number of edges to $H$, we are roughly able to assume that all balls of radius $r$ in $G$ contain $\Omega(r^2)$ vertices. This property has previously been called quadratic expansion in the context of $(1+\varepsilon, \beta)$ spanners \cite{pettie2007low}. 
The sampling scheme that handles small clusters is based on the following observation.




\begin{observation}
    Let $G$ be an $n$-vertex graph such that every subgraph of $G$ of radius $\leq r$ contains at most $O(r^2)$ vertices. Then $G$ admits an additive emulator with $\Oish(n)$ edges and error $+\Oish(r)$. 
\end{observation}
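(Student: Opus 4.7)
My plan is to construct the emulator via a hierarchy of greedy $r$-covers of the vertex set, using the ball-size hypothesis to drive geometric shrinkage across levels. First I would build a decreasing sequence $V = V_0 \supseteq V_1 \supseteq \cdots \supseteq V_k$ in which each $V_i$ is obtained from $V_{i-1}$ by a greedy set-cover routine applied to the family $\{B_r(v) \cap V_{i-1} : v \in V_{i-1}\}$: repeatedly pick an uncovered $v \in V_{i-1}$, include it in $V_i$, and mark $B_r(v) \cap V_{i-1}$ as covered. Because the hypothesis bounds each covering set by $O(r^2)$ vertices, the greedy $O(\log n)$-approximation against the natural $|V_{i-1}|/O(r^2)$ lower bound on the optimum cover is intended to yield $|V_i| = \Oish(|V_{i-1}|/r^2)$. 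Iterating then drives the hierarchy down to $|V_k| = O(1)$ after $k = O(\log n / \log r)$ levels.

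Once the hierarchy is in place, the emulator $H$ would consist of: (i) for each level $i \in \{1,\ldots,k\}$ and each $v \in V_{i-1}$, a single ``parent'' edge $(v, \pi_i(v))$ of weight $\dist_G(v, \pi_i(v)) \leq r$, where $\pi_i(v) \in V_i$ is the greedy center that covered $v$; and (ii) at the top, all $\binom{|V_k|}{2} = O(1)$ pairwise edges of $V_k$, weighted by their exact $G$-distances. The total edge count is $\sum_{i=0}^{k-1}|V_i| + O(1) = \Oish(n)$ by geometric summation. For the additive error, I would route any pair $u,v$ as $u \to \pi_1(u) \to \cdots \to \pi_k(u) \to \pi_k(v) \to \cdots \to \pi_1(v) \to v$: the ladder up and ladder down each cost at most $kr$, while the top-level edge realises $\dist_G(\pi_k(u),\pi_k(v))$ exactly, which is at most $\dist_G(u,v) + 2kr$ by the triangle inequality. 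Summing gives $\dist_H(u,v) \leq \dist_G(u,v) + 4kr = \dist_G(u,v) + \Oish(r)$, and the reverse inequality $\dist_H \geq \dist_G$ is automatic because every emulator edge carries a genuine $G$-distance.

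The hard part will be proving rigorously that $|V_i|$ really shrinks by a factor of $\Oish(r^2)$ at each level. The stated hypothesis supplies only an upper bound on $|B_r(v)|$, so at deep levels $V_{i-1}$ can become so spread out that $B_r(v) \cap V_{i-1}$ is a singleton for most $v \in V_{i-1}$, in which case each greedy step covers only one vertex at a time and the hierarchy stalls. I expect to resolve this by interleaving the construction with the quadratic-expansion preprocessing sketched in the technical overview, which ensures the matching lower bound $|B_r(v)| \geq \Omega(r^2)$ in the working graph; under the two-sided bound, the optimum $r$-cover of each $V_{i-1}$ provably has size $\Oish(|V_{i-1}|/r^2)$, validating the geometric shrinkage invoked above. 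A fallback, in case the lower bound is unavailable at higher levels, is to keep the radius fixed at $r$ but recurse on $V_1$ as an abstract metric space inheriting the hypothesis from $G$, so that the top-level ``all pairs'' step of (ii) is replaced by a recursive $\Oish(|V_1|)$-edge emulator, preserving the overall $\Oish(n)$ size and $+\Oish(r)$ error budget.
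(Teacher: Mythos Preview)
Your core shrinkage claim has the inequality backwards. An upper bound $|B_r(v)| \le O(r^2)$ on the covering sets yields only a \emph{lower} bound $\mathrm{OPT} \ge |V_{i-1}|/O(r^2)$ on the minimum cover; combining this with greedy's $O(\log n)\cdot\mathrm{OPT}$ guarantee gives you nothing from above. To force $|V_i| = \Oish(|V_{i-1}|/r^2)$ you would need the opposite hypothesis, namely a \emph{lower} bound $|B_r(v)| \ge \Omega(r^2)$. A path on $n$ vertices satisfies the stated hypothesis (each $r$-ball has $O(r) \le O(r^2)$ vertices), yet any $r$-cover has size $\Omega(n/r)$, so $|V_1| = \Theta(n/r)$; at the next level $B_r(v)\cap V_1$ is already $O(1)$ and the hierarchy stalls immediately. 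Your proposed fixes do not escape this: the ``quadratic-expansion preprocessing'' in the overview is precisely the complementary regime (large balls), so you cannot invoke it to manufacture a lower bound inside a graph that is assumed to have small balls; and the metric-space recursion on $V_1$ inherits the same one-sided hypothesis and stalls for the same reason.

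The paper's argument uses the ball-size upper bound in the only direction it actually points: to control the \emph{edge count}, not a cover size. It samples $V' \subseteq V$ at rate $\Theta(\log n / r)$ and adds, for every pair $s,t \in V'$ with $\dist_G(s,t) \le r$, the exact-weight edge $(s,t)$; since each sampled vertex has at most $O(r^2)$ neighbours within radius $r$, hence $O(r\log n)$ sampled ones, the total is $|V'|\cdot O(r\log n)=\Oish(n)$. The error bound comes from the sampling density alone (every length-$r/3$ subpath of a shortest path is hit w.h.p., so consecutive samples along the path are joined exactly), together with a linear-size $\cdot\log n$ multiplicative spanner to absorb the two $O(r)$-length tails. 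No hierarchy and no set-cover are needed.
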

\begin{proof}[Proof sketch]
    Let $G = (V, E)$. 
    Add a linear-size $\cdot \log n$ multiplicative spanner to our emulator $H$ of $G$. 
    Sample each vertex $v \in V$ into set $V'$ independently with probability $\Theta(\log n / r)$. For all $s, t \in V'$ such that $\dist_G(s, t) \leq r$, add the emulator edge $(s, t)$ with weight $\dist_G(s, t)$ to $H$. This completes the construction.
    
    Size bound: each vertex $v \in V'$ has at most $O(r^2)$ vertices within radius $r$, so at most $O(r \log n)$  edges incident to $v$ are added to $H$ in expectation. Then the total number of edges added to $H$ is $|V'| \cdot O(r \log n) = \Oish(n)$ in expectation. 

    Error bound: fix a pair of vertices $s, t \in V$ and an $s \leadsto t$-path $\pi$. For each subpath $\pi'$ of $\pi$ of length at least $r/3$, set $V'$ hits path $\pi'$ with high probability (i.e., $\pi' \cap V' \neq \emptyset$). Now let $v_1, \dots, v_k$ be the vertices in $\pi \cap V'$, listed in the order they appear in path $\pi$. Then edge $(v_i, v_{i+1})$ is in $H$ for all $i$, with high probability. This implies that $\dist_H(v_1, v_k) = \dist_G(v_1, v_k)$. We conclude:
    $$
    \dist_H(s, t) \leq \dist_H(s, v_1) + \dist_H(v_1, v_k) + \dist_H(v_k, t) \leq \dist_G(v_1, v_k) + \Oish(r).
    \qedhere
    $$
    
\end{proof}

Finally, the remaining edges are added to our emulator in a greedy fashion similar to the path-buying strategies of \cite{BKMP05x, BV16} used in prior additive spanner  constructions. While there exists a pair of vertices $s, t \in V$ that do not satisfy the additive emulator condition in $H$ (e.g.,  $\dist_H(s, t) > \dist_G(s, t) + r$), we will add an emulator edge to $H$ to ensure that the additive emulator condition becomes satisfied between $s$ and $t$. We will then use a simple counting argument to argue that at most $O(n)$ emulator edges are added to $H$ in this way.

\section{Preliminaries}
For the remainder of this paper, we  use $\widehat{O}(\cdot)$ to hide $n^{O(\epsilon)}$ factors and $O_{\epsilon}(\cdot)$ to hide constant factors dependent on $\epsilon$  in our asymptotic notation. We let $B(v, r)$ denote the set of vertices of distance at most $r$ from $v$ in $G$. 

\begin{lemma}[Lemma 13 of \cite{BV16}] 
\label{lem:clustering_lem}
Let  $r \in [1, n]$ and  $\epsilon > 0$. 
For every $n$-vertex graph $G = (V, E)$, there exists a set of vertices $\mathcal{C} = \{v_1, \dots, v_k\}$ and corresponding integers $\mathcal{R} = \{r_1, \dots, r_k\}$, where $r_i = \widehat{\Theta}(r)$, satisfying the following:
\begin{itemize}
    \item (Coverage) For each $v \in V$, $v \in B(v_i, r_i)$ for some $i \in [1, k]$.  
    \item (Low Overlap) $\sum_{i=1}^k|B(v_i, 2r_i)| = O_{\epsilon}(n)$.
\end{itemize}
\end{lemma}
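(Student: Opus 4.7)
The plan is a level-based greedy ball-packing argument, enabled by a pigeonhole lemma that assigns each vertex a radius whose ball grows only modestly under expansion.

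First, for each vertex $v \in V$, I would select a radius $r_v$ from the geometric progression $\{r, 2r, 4r, \ldots, 2^L r\}$ with $L = \Theta(1/\epsilon)$, chosen so that $|B(v, 2 r_v)| \leq n^{O(\epsilon)} \cdot |B(v, r_v/2)|$. Existence follows from a pigeonhole: setting $S_i = |B(v, 2^i r)| / |B(v, 2^{i-1} r)|$, we have $\prod_{i=1}^L S_i \leq n$, so $\sum_{\ell=1}^{L-1} \log(S_\ell S_{\ell+1}) \leq 2\log n$ (each $\log S_i$ contributes at most twice), forcing some consecutive pair to satisfy $S_\ell S_{\ell+1} \leq n^{O(1/L)} = n^{O(\epsilon)}$; set $r_v = 2^\ell r$ for this $\ell$. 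Since $2^L = n^{O(\epsilon)}$ whenever $\epsilon$ is not too small relative to $1/\log n$, every $r_v$ lies in the required range $\widehat{\Theta}(r)$.

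Next, I would partition $V$ into $L+1$ levels $V_\ell = \{v : r_v = 2^\ell r\}$ and, at each level independently, run a greedy ball-packing procedure: while some $v \in V_\ell$ remains uncovered at this level, add $(v, r_v)$ to the output and mark $B(v, r_v) \cap V_\ell$ as covered. Coverage of $V$ is automatic, since each vertex lies in exactly one level and is either selected or covered by a same-level ball. For the overlap bound, any two centers selected at the same level $\ell$ lie at graph-distance strictly greater than $2^\ell r$ (otherwise the later one would have been covered by the earlier one's ball), so the half-radius balls $B(v_i, 2^{\ell-1} r)$ at level $\ell$ are pairwise disjoint and sum to at most $n$. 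The good-radius property then yields $\sum_i |B(v_i, 2 r_i)| \leq n^{O(\epsilon)} \cdot n$ per level; summing over the $O(1/\epsilon)$ levels and folding the $n^{O(\epsilon)}$ factor into the $\widehat{\Theta}(r)$ notation (equivalently, by rescaling $\epsilon$ to $\epsilon/C$ for a large enough constant $C$ before running the construction) gives the claimed $O_\epsilon(n)$ bound.

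The main obstacle, I expect, is tuning the pigeonhole correctly so that the good-radius property controls a factor-of-four rather than just a factor-of-two radius expansion: only the stronger condition allows disjointness of the half-radius balls to translate into a useful sum bound on the $2 r_i$-balls that actually appear in the overlap statement. This forces us to consider consecutive pairs of ratios rather than single ratios in the pigeonhole, paying an extra factor of $2$ in the loss, which in turn determines the required number of levels $L = \Theta(1/\epsilon)$ and hence the $n^{O(\epsilon)}$ blow-up in the radii.
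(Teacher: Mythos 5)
The paper does not prove this lemma; it quotes it verbatim from \cite{BV16}. Your reconstruction has the right skeleton --- pigeonhole a non-expanding radius for each vertex, then greedily pack centers and exploit disjointness of half-radius balls --- but the parameters are tuned so that you prove a strictly weaker statement. With only $L=\Theta(1/\epsilon)$ doubling scales, the pigeonhole can only guarantee a growth factor of $n^{\Theta(1/L)} = n^{\Theta(\epsilon)}$ at the chosen radius, so your packing yields $\sum_i |B(v_i,2r_i)| \le n^{1+\Theta(\epsilon)}$ per level. That is polynomially, not constant-factor, larger than $n$, and rescaling $\epsilon$ to $\epsilon/C$ cannot repair it: it only changes the constant in the exponent, and the offending factor sits on the overlap sum, not on the radii, so it cannot be ``folded into $\widehat{\Theta}(r)$.'' The distinction is not cosmetic here: the Low Overlap bound is exactly what makes the preprocessing phase contribute $O_\epsilon(n)$ edges to a \emph{linear}-size emulator, and the paper's closing remark pins the dependency at $O(1)^{1/\epsilon}$, i.e.\ the intended bound is $2^{O(1/\epsilon)}\cdot n$.

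The fix is to trade the two resources the other way. Use $J=\Theta(\epsilon\log n)$ doubling scales, so radii range over $[r,\, n^{O(\epsilon)}r]$ --- precisely the slack that $\widehat{\Theta}(r)$ is there to absorb --- and for each $v$ take the smallest scale at which quadrupling the radius multiplies the ball size by at most $K=2^{\Theta(1/\epsilon)}$; if no such scale existed among $J$ consecutive doublings, the ball would exceed $K^{\Omega(J)}>n$ vertices. This gives growth factor $O_\epsilon(1)$ instead of $n^{O(\epsilon)}$. A second, smaller issue then appears: you now have $\Theta(\epsilon\log n)$ levels, so summing ``at most $n$ per level'' costs a spurious $\log n$ factor. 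Avoid it by running one global greedy pass over the levels in \emph{decreasing} order of radius: when $v_{i'}$ is selected after $v_i$, we have $r_{i'}\le r_i$ and $d_G(v_i,v_{i'})>r_i\ge r_i/2+r_{i'}/2$, so the half-radius balls of \emph{all} selected centers are pairwise disjoint, $\sum_i|B(v_i,r_i/2)|\le n$ holds globally, and $\sum_i|B(v_i,2r_i)|\le K\cdot n = O_\epsilon(n)$ as required. Your coverage argument and the ``factor-of-four vs.\ factor-of-two'' observation carry over unchanged.
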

For $i \in [1, k]$, we will think of set $B(v_i, 2r_i)$ as a cluster of vertices in $G$, and we will refer to $B(v_i, r_i)$ as the core of cluster $B(v_i, 2r_i)$. Note that the coverage property of Lemma \ref{lem:clustering_lem} states that every vertex in $V$ belongs to the core of some cluster in the  decomposition. For every vertex $v$ in $V$, we let $C(v)$ denote a cluster $B(v_i, 2r_i)$ containing $v$ in its core.


\section{Construction}
In our construction, we will recursively use old emulator upper bounds to obtain new and improved emulator upper bounds. This is formalized in the following lemma.

\begin{lemma}
Suppose for every $n$-vertex graph $G$ and every $\epsilon > 0$ there exists an additive emulator of $G$ with $O_{\epsilon}(n)$ edges and error $+O(n^{\alpha + \epsilon})$, where $\alpha \in (0, 1)$. Then for every $n$-vertex graph $G$ and every $\epsilon > 0$ there exists an additive emulator of $G$ with $O_{\epsilon}(n)$ edges and error $+O(n^{1/(6-4\alpha) + \epsilon })$. 
\label{lem:emu_recursion}
\end{lemma}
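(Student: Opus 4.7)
The plan is to set the target additive error $r = n^{1/(6-4\alpha) + \Theta(\epsilon)}$ along with a cluster-size threshold $R$ (to be fixed by a balancing argument), and build the emulator $H$ in four stages. First, apply Lemma~\ref{lem:clustering_lem} at scale $r$ to obtain a cluster family $\{B(v_i, 2r_i)\}$ with $r_i = \widehat{\Theta}(r)$ satisfying the low-overlap property $\sum_i |B(v_i, 2r_i)| = O_\epsilon(n)$; call a cluster \emph{small} if $|B(v_i, 2r_i)| \leq R$ and \emph{large} otherwise. Second, for each large cluster invoke the inductive hypothesis on the induced subgraph $G[B(v_i, 2r_i)]$ to produce a sub-emulator with $O_\epsilon(|B(v_i, 2r_i)|)$ edges and $+O(|B(v_i, 2r_i)|^{\alpha + \epsilon})$ additive error; insert these edges into $H$. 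By low overlap the total cost of this stage is $O_\epsilon(n)$ edges.

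Third, execute the sampling scheme from the Observation: add a linear-size $O(\log n)$-multiplicative spanner to $H$, sample $V'$ at rate $p = \Theta(\log n / r)$, and insert the weighted emulator edge $(u, v, \dist_G(u, v))$ for every pair $u, v \in V'$ with $\dist_G(u, v) \leq r$. Because vertices in small clusters satisfy $|B(v, r)| \leq R$, the expected edge cost of this step is $\widetilde{O}(nR/r^2)$, which is $\widetilde{O}(n)$ provided $R \leq \widetilde{O}(r^2)$. Fourth, perform greedy path-buying: while some pair $(s, t)$ satisfies $\dist_H(s, t) > \dist_G(s, t) + r$, insert the weighted emulator edge $(s, t, \dist_G(s, t))$ into $H$.

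For the error bound, fix an $s$-$t$ pair and a shortest path $\pi$ in $G$, and decompose $\pi$ into segments according to cluster membership. Segments lying inside a large cluster are approximated by the corresponding sub-emulator with at most $+O(|B(v_i, 2r_i)|^{\alpha + \epsilon})$ error per segment, which is $+O(r)$ as long as the threshold is set so that $R \leq r^{1/\alpha}$ caps the relevant cluster sizes. Segments through a small-cluster region have sampled vertices within distance $r$ of each other with high probability, so consecutive samples are joined by exact weighted emulator edges, yielding preservation of the segment up to $\widetilde{O}(r)$ slack at the endpoints. Any remaining $(s, t)$ violating the $+r$ condition is fixed explicitly by Stage~4.

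The main obstacle is to bound the edges purchased in Stage~4 by $O_\epsilon(n)$. I plan to adapt a BV16-style charging argument: attribute each purchased edge $(s, t)$ to a combinatorial feature of the underlying shortest path---most naturally the pair of clusters containing its endpoints---and use low overlap together with the quadratic-expansion guarantee furnished by Stage~3 to cap the total number of such features at $O_\epsilon(n)$. The target exponent $1/(6 - 4\alpha)$ should then emerge from the three-way balance of constraints $R \leq \widetilde{O}(r^2)$ (Stage~3), $R \leq r^{1/\alpha}$ (Stage~2), and the Stage~4 counting bound; getting this balance to yield precisely $r = n^{1/(6 - 4\alpha)}$ rather than a looser exponent is the most delicate piece of the analysis.
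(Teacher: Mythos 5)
Your Stages 1--3 match the paper's preprocessing (same clustering, same small/large dichotomy at threshold $R \approx r^2$, same recursive call on large clusters, same sampling idea for small ones), but two of the details as written do not survive scrutiny. First, inserting an edge for \emph{every} pair $u,v \in V'$ with $\dist_G(u,v) \le r$ is too many edges: a sampled vertex inside a dense region can have $\Theta(n)$ vertices within distance $r$, hence $\widetilde{\Theta}(n/r)$ incident emulator edges. The paper restricts the all-pairs insertion to sampled vertices \emph{within the same small cluster}, where the $O(r^2)$ size bound gives $O(|B(v_i,2r_i)|)$ edges per cluster and the low-overlap property sums this to $O_\epsilon(n)$. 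Second, your error analysis by decomposing $\pi$ into cluster segments cannot work: a large cluster may contain up to $n$ vertices, so its recursive emulator carries error $+O(n^{\alpha+\epsilon}) \gg r$, and no threshold $R$ caps this because you have no upper bound on large-cluster sizes. In the paper the final error bound comes \emph{only} from the greedy phase's termination condition (plus a triangle inequality through the bought edge); the recursive emulators are never asked to deliver $+O(r)$ error on their own.

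The more serious gap is Stage 4, which is the entire content of the lemma. Charging each bought edge to the pair of clusters containing its endpoints gives at best $O(k^2) = O(n^2)$, not $O(n)$, since the number of clusters $k$ can be $\Theta(n)$. The paper's mechanism is different and is where the exponent $1/(6-4\alpha)$ actually comes from: rather than buying $(s,t)$ itself, it buys $(x,y)$ where $x$ is the farthest vertex on $\pi$ from $s$ such that the prefix $\pi(s,x)$ is already $+\hat{r}$-preserved (and symmetrically for $y$), and shows each purchase newly connects $\Omega(n)$ \emph{vertex pairs} $S \times T$, where $S$ is the union of clusters met by the prefix. The lower bound $|S| = \Omega(\sqrt{n})$ follows because maximality of the prefix forces the recursive-emulator error accumulated along it to reach $\Omega(\hat{r})$; each large cluster contributes only $O\bigl((r^2)^{\alpha+\epsilon}\bigr)$ error but at least $r^2\log^{-2}n$ vertices (small clusters contribute zero error thanks to the sampled edges), so the prefix must meet large clusters of total size $\gtrsim \hat{r}\cdot r^{2-2\alpha} = r^{3-2\alpha}$, and setting $r^{3-2\alpha} = n^{1/2}$ is precisely $r = n^{1/(6-4\alpha)}$. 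Without this "maximal prefix accumulates error, error forces volume" argument --- and the accompanying bookkeeping that the pairs in $S\times T$ were not already connected, which needs the constant hierarchy $\hat{r}$, $8\hat{r}$, $16\hat{r}$ --- the $O(n)$ bound on purchased edges, and hence the lemma, is not established.
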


\begin{proof} Let $H$ denote our new emulator of $G$. We first give the construction of $H$, and then we prove that $H$ has our desired properties.

\paragraph{Construction of $H$.}  Begin the preprocessing phase by adding a linear-sized multiplicative spanner with $\cdot \log n$ distortion to $H$. Let $r = n^{1/(6 - 4 \alpha)}$.  Sample each $v \in V$ into $V'$ independently with probability $\Theta(r^{-1} \log n)$. Now perform the clustering decomposition of Lemma \ref{lem:clustering_lem} with parameters $r$ and  $\epsilon > 0$ to obtain a set of cluster centers $\mathcal{C} = \{v_1, \dots, v_k\}$ and corresponding cluster radii $\mathcal{R} = \{r_1, \dots, r_k\}$. By Lemma \ref{lem:clustering_lem},  there exists a universal constant $c > 0$  such that  $4r_i \log n \leq r\cdot n^{c\epsilon}$ for $i \in [1, k]$. We let $\hat{r}$ denote $r\cdot n^{c\epsilon}$. 

Fix an $i \in [1, k]$. We say that $B(v_i, 2r_i)$ is a \textit{small} cluster if $|B(v_i, 2r_i)| \leq  r^{2}\log^{-2}n$. Else we say that $B(v_i, 2r_i)$ is a \textit{large} cluster. If $B(v_i, 2r_i)$ is a small cluster, then for all  vertices $s, t$ in $B(v_i, 2r_i) \cap V'$, add the emulator edge $(s, t)$ with weight $d_G(s, t)$ to $H$.
If $B(v_i, 2r_i)$ is a large cluster, then we recursively call our presupposed emulator procedure on the induced subgraph $G' = G[B(v_i, 2r_i)]$ with parameter $\epsilon$ to obtain a emulator $H'$ with $O_{\epsilon}(|B(v_i, 2r_i)|)$ edges and additive error $+O(|B(v_i, 2r_i)|^{\alpha + \epsilon})$. We add the edges of $H'$ to $H$.
By repeating the previous steps for all $i \in [1, k]$ we complete the preprocessing phase.

Now we greedily add emulator edges to $H$ to connect the remaining pairs of vertices violating our spanner property. 
While there exists vertices $s, t$ in $V$ such that $d_H(s, t) > d_G(s, t) + 16\hat{r}$, do the following. Let $\pi$ be a shortest $(s, t)$-path. Let $x$ be the vertex in $\pi$ farthest from $s$ such that
for any two vertices $u, w$ in $ \pi(s, x)$, $d_H(u, w) \leq d_G(u, w) + \hat{r}$. Additionally,  let $y$ be the vertex in $\pi$ farthest from $t$ such that for any two vertices $u, w$ in $\pi(y, t)$,   $d_H( u, w) \leq d_G(u, w) + \hat{r}$. Add emulator edge $(x, y)$ with weight $d_G(x, y)$ to $H$. This greedy phase completes the construction. Our procedure is summarized in Figure \ref{fig:new_emulator}. 

\begin{figure}[h!]
    \centering

\fbox{
\begin{minipage}[c]{0.9\textwidth}
\begin{center}
    \textbf{Additive Emulator Procedure} \texttt{NewEmulator}:
\end{center}
\textbf{Input:}
\begin{itemize}
    \item An $n$-vertex graph $G = (V, E)$ and a parameter $\epsilon > 0$.
    \item A procedure \texttt{OldEmulator}, which takes as input a graph $G'$, and outputs an additive emulator $H'$ of $G'$ with   $O_{\epsilon}(|G'|)$ edges and error $+O(|G'|^{\alpha + \epsilon})$. 
\end{itemize}
    \textbf{Output:} An additive emulator $H$ of $G$ with $O_{\epsilon}(n)$ edges and error $+\widehat{O}(n^{1/(6 - 4  \alpha ) })$. 
\\

\textbf{Preprocessing Phase:}
\begin{enumerate}
    \item Add a linear-sized spanner with $\cdot \log n$ multiplicative distortion to $H$.
    \item Let $r = n^{1/(6 - 4 \alpha )}$. Sample each vertex $v \in V$ into $V'$ independently with probability $\Theta(\log n / r)$. 
    \item Perform the clustering decomposition of Lemma \ref{lem:clustering_lem} with parameters $r$ and $\epsilon$ to obtain a set of cluster centers $\mathcal{C} = \{v_1, \dots, v_k\}$ and corresponding cluster radii $\mathcal{R} = \{r_1, \dots, r_k\}$. Let $\hat{r} = r \cdot n^{c\epsilon}$ for a sufficiently large universal constant $c$.
    \item For each $i \in [1, k]$: 
    \vspace{2mm}   \\
    $\bullet$  If $|B(v_i,  2r_i)| \leq r^2 \log^{-2} n$, then for all vertices $s, t$ in $B(v_i, 2r_i) \cap V'$, add  edge $(s, t)$ with weight $d_G(s, t)$ to $H$.
    \vspace{2mm}   \\
    $\bullet$ Else if $|B(v_i, 2r_i)| > r^{2}\log^{-2}n$, then call \texttt{OldEmulator} on graph $G[B(v_i, 2r_i)]$. Add the emulator output by \texttt{OldEmulator} to $H$.
\end{enumerate}
\textbf{Greedy Phase:}

While there exist vertices $s, t$ in $V$ such that $d_H(s, t) > d_G(s, t) + 16\hat{r}$, do the following. Let $\pi$ be a shortest $(s, t)$-path. 
Let $x$ be the vertex in $\pi$ farthest from $s$ such that
for any two vertices $u, w$ in $ \pi(s, x)$, $d_H(u, w) \leq d_G(u, w) + \hat{r}$. Additionally,  let $y$ be the vertex in $\pi$ farthest from $t$ such that for any two vertices $u, w$ in $\pi(y, t)$,   $d_H( u, w) \leq d_G(u, w) + \hat{r}$. Add edge $(x, y)$  with weight $d_G(x, y)$ to $H$. 
\end{minipage}
}
    \caption{The recursive procedure for the improved emulator upper bounds. }
    
    \label{fig:new_emulator}
\end{figure}

\paragraph{Bounding the error of $H$.}  Fix  vertices $s, t \in V$.  If $d_H(s, t) \leq d_G(s, t) + 16\hat{r}$, then we are done. 
Otherwise, vertex pair $(s, t)$ is considered in some round of the greedy phase. Let path $\pi$ and vertices $x, y$ be as defined in the construction. Note that since edge $(x, y)$ is in $H$, $d_H(x, y) = d_G(x, y)$. 
We have the following:
\begin{align*}
    d_H(s, t) & \leq d_H(s, x) + d_H(x, y) + d_H(y, t) \\ 
   & \leq (d_G(s, x) + \hat{r}) + d_G(x, y) + d_G(y, t) + \hat{r}) \\
 &   \leq d_G(s, t) + 2\hat{r}
\end{align*}
Since $\hat{r} = r \cdot  n^{c\varepsilon}$ for a universal constant $c > 0$, we may obtain our desired error by taking our
construction parameter $\varepsilon > 0$ to be sufficiently small.

\paragraph{Bounding the size of $H$.} We begin by bounding the number of edges added to $H$ in the preprocessing phase.  Fix an $i \in [1, k]$. If $B(v_i, 2r_i)$ is a small cluster, then with high probability $$|B(v_i, 2r_i) \cap V'| = \Theta(|B(v_i, 2r_i)| \cdot r^{-1} \log n).$$ Since we add an emulator edge between every pair of vertices in $B(v_i, 2r_i) \cap V'$, it follows that we add at most $$|B(v_i, 2r_i) \cap V'|^2 =  \Theta(|B(v_i, 2r_i)|^2 \cdot r^{-2} \log^2 n) = O(|B(v_i, 2r_i)|)$$ edges to $H$. Otherwise, if $B(v_i, 2r_i)$ is a large cluster, then our recursive emulator call contributes $O_{\epsilon}(|B(v_i, 2r_i)|)$ edges to $H$. Then by the low overlap property of Lemma \ref{lem:clustering_lem}, we add $O_{\epsilon}(n)$ edges to $H$ in the preprocessing phase.

To bound the number of edges added in the greedy phase, we use a path buying argument reminiscent of the proof in \cite{BV16}. We say that vertices $s, t$ in $V$ are \textit{connected} in $H$ if $d_H(s, t) \leq d_G(s, t) + 8\hat{r}$. Each time we add an emulator edge to $H$ in the greedy phase, we will argue that $\Omega(n)$ pairs of vertices in $V$ become connected in $H$ for the first time. Then since there are $O(n^2)$ pairs of vertices in $V$, the greedy phase adds $O(n)$ edges to $H$. 

Let $(s, t)$ be a pair of vertices considered in some round of the greedy phase. Let path $\pi$ and vertices $x, y$ be as defined in the construction. We say that paths $\pi(s, x)$ and $\pi(y, t)$ are the prefix and suffix of $\pi$, respectively. We will define a set $S$ of vertices in $G$ near the prefix of $\pi$ and a set $T$ of vertices in $G$ near the suffix of $\pi$. Then we will show that after adding edge $(x, y)$ to $H$,  all pairs of vertices in $S \times T$ become connected in $H$ for the first time. Finally, we will establish that $|S \times T| = \Omega(n)$, completing the proof.

We define $S$ and $T$ as follows:
\[
S := \cup_{v \in \pi(s, x)} C(v) \qquad \text{ and } \qquad T := \cup_{v \in \pi(y, t)} C(v).
\]

We now verify that $S$ and $T$ satisfy our desired properties.

\begin{proposition}
After edge $(x, y)$ is added to $H$, all pairs of vertices in $S \times T$ become connected in $H$ for the first time.
\label{prop:first_conn}
\end{proposition}
\begin{proof}
Let $v \in V$, and suppose that $v$ is contained in the core of a cluster $C(v)$ with radius $2r_i$, where $r_i \in \mathcal{R}$. Let $u \in C(v)$. Then by our choice of $\hat{r}$,
$$
\dist_G(v, u) \leq 4r_i \leq \hat{r} \cdot \log^{-1} n.
$$
Therefore, for each vertex $u \in S$, there exists a vertex $v \in \pi(s, x)$ such that 
$$\dist_H(v, u) \leq \log n \cdot \dist_G(v, u) \leq \hat{r},$$
where the first inequality follows from the $\cdot \log n$ multiplicative spanner in $H$. 
By an identical argument, for each vertex $ u \in T$, there exists a vertex $v \in \pi(y, t)$ such that
$$\dist_H(v, u) \leq \log n \cdot \dist_G(v, u) \leq \hat{r}.$$

Now fix vertices $s' \in S$ and $t' \in T$, and let $x'$  (respectively, $y'$) be the vertex in $\pi(s, x)$ (respectively,  $\pi(y, t)$) that is closest to $s'$ (respectively,  $t'$) in $G$.
 (See Figure \ref{fig:emulator_pic} for a visualization of this situation.) Note that by the above argument, $\dist_H(s', x') \leq \hat{r}$ and $\dist_H(y', t') \leq \hat{r}$. 
After edge $(x, y)$ is added to $H$, $s'$ and $t'$ are connected in $H$:
\begin{align*}
d_H(s', t')  & \leq d_H(s', x') + d_H(x', y') + d_H(y', t') \\
& \leq  \hat{r} + d_H(x', y') + \hat{r}  \\
& \leq  d_H(x', x) + d_H(x, y) + d_H(y, y') + 2\hat{r} \\
& \leq  (d_G(x', x) + \hat{r}) + d_G(x, y) + (d_G(y, y') + \hat{r}) +2\hat{r} \\
& \leq  d_G(x', y')  + 4\hat{r} \\
& \leq d_G(x', s') + d_G(s', t') + d_G(t', y') + 4\hat{r} \\
& \leq d_G(s', t') + 6\hat{r},
\end{align*}
where the fourth inequality follows from our choice of $x$ and $y$, and the fifth inequality follows from the fact that $\pi(x', y')$ is a shortest path in $G$.

Now suppose for the sake of contradiction that $s'$ and $t'$ were connected in $H$ \textit{before} edge $(x, y)$ is added to $H$. Then we claim that pair $(s, t)$ had low additive error in $H$ before edge $(x, y)$ was added: 
\begin{align*}
    d_H(s, t) & \leq d_H(s, x') + d_H(x', y') + d_H(y', t) \\
    & \leq (d_G(s, x') + \hat{r}) + d_H(x', s') + d_H(s', t') + d_H(t', y') + (d_G(y', t) + \hat{r})  \\
    & \leq (d_G(s, x')+\hat{r}) + \hat{r} + d_H(s', t') + \hat{r} + (d_G(y', t) + \hat{r}) \\
    & \leq d_G(s, x') + d_H(s', t') + d_G(y', t) + 4\hat{r} \\
    & \leq d_G(s, x') + (d_G(s', t') + 8\hat{r}) + d_G(y', t) + 4\hat{r} \\
    & \leq d_G(s, x') + (d_G(s', x') + d_G(x', y') + d_G(y', t')) + d_G(y', t) + 12\hat{r} \\
    & \leq d_G(s, x') + d_G(x', y') + d_G(y', t) + 14\hat{r} \\
    & \leq d_G(s, t) + 14\hat{r}
\end{align*}
This contradicts our assumption that pair $(s, t)$ was considered in this round of the greedy phase. We conclude that all pairs of vertices in $S \times T$ are connected for the first time after edge $(x, y)$ is added to $H$.
\end{proof}

\begin{figure}[h]
    \centering
    \includegraphics{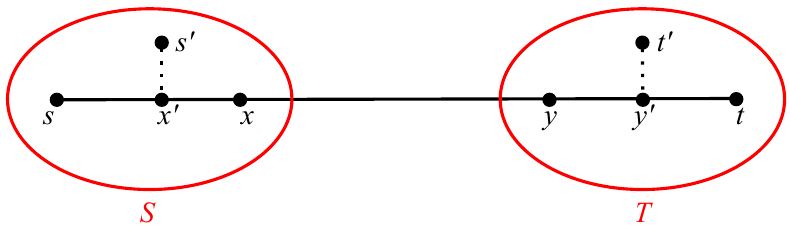}
    \caption{The solid black line denotes path $\pi$ in $G$. After edge $(x, y)$ is added to $H$ in the greedy phase, all pairs of vertices  $(s', t')$ in $S \times T$ become connected in $H$ for the first time. }
    \label{fig:emulator_pic}
\end{figure}


What remains is to show that $|S \times T| = \Omega(n)$. Specifically, we will show that $|S| = \Omega(n^{1/2})$, and $|T| = \Omega(n^{1/2})$ will follow by a symmetric argument. Notice that the size of $S$ is constrained by our requirement that for any two vertices $u, w$ in $\pi(s, x)$, $d_H(u, w) \leq d_G(u, w) + \hat{r}$. 
\begin{proposition}
$|S \times T| = \Omega(n)$
\label{prop:s_t_size}
\end{proposition}
\begin{proof}
Instead of lower bounding $|S|$ directly, we  find it easier to lower bound a subset $S' \subseteq S$. We now explicitly construct $S'$. We begin the construction by initializing two sets as $U := \emptyset$ and $S' := \emptyset$ and setting an integer counter 
$\texttt{err} := 0$. Starting at $s$, we walk through $\pi(s, x)$ as follows. Let $s_1$ be a vertex in $\pi(s, x) \cap V'$ such that $d_G(s, s_1) \leq r$. Note that since we sample vertices in $V$ into $V'$ with probability $\Theta(\log n / r)$, such a vertex $s_1$ will exist with high probability by the Chernoff bound. Add $s_1$ to $U$ and add the vertices in $C(s_1)$ to $S'$. Now, given $s_i$, we let $s_{i+1}$ be a vertex in $\pi(s_i, x) \cap V'$ such that $r/2 \leq d_G(s_i, s_{i+1}) \leq r$. Again by the Chernoff bound, such a vertex $s_{i+1}$ will exist with high probability.
Add vertex $s_{i+1}$ to $U$, and add the vertices in cluster $C(s_{i+1})$ to $S'$. 
If $C(s_i)$ is a large cluster, then we increment $\texttt{err}$ by $+O(|C(s_i)|^{\alpha + \epsilon})$, which corresponds to the additive error of the linear sized emulator of $C(s_i)$ we inserted into $H$ in the preprocessing phase. We proceed in this manner until we add the first vertex $s_{\ell}$ such that $\texttt{err} + O(|C(s_{\ell})|^{\alpha + \epsilon}) > \hat{r} / 2$. (This implies that  $\texttt{err} \leq \hat{r} /2$.) Note that $S' = \cup_{s_i \in U}C(s_i)$.

Now we verify that $S' \subseteq S$, as desired. To prove this, it will suffice to show that $\pi(s, s_{\ell})$ is a subpath of $\pi(s, x)$. Fix an $i \in [1, \ell - 1]$ and suppose that $C(s_i)$ is a small cluster. Then since $s_i$ is in the core of cluster $C(s_i)$ and $d_G(s_i, s_{i+1}) \leq r$, it follows that $s_{i+1} \in C(s_i)$.  Note that in this case, emulator edge $(s_i, s_{i+1})$ is added to $H$ in the preprocessing phase, allowing us to travel from $s_i$ to $s_{i+1}$ without incurring any error. Specifically, $d_H(s_i, s_{i+1}) = d_G(s_i, s_{i+1})$.  Otherwise, $C(s_i)$ is a large cluster. In this case, we added an emulator of $G[C(s_i)]$ with additive error $+O(|C(s_i)|^{\alpha + \epsilon})$ to $H$ in the preprocessing phase. Note that since $s_i$ is in the core of $C(s_i)$ and $d_G(s_i, s_{i+1}) \leq r$, it follows that any shortest $(s_i, s_{i+1})$-path in $G$ is contained in $G[C(s_i)]$. Consequently, $d_H(s_i, s_{i+1}) \leq d_G(s_i, s_{i+1}) + O(|C(s_i)|^{\alpha + \epsilon})$.
Then by the previous observations, we  conclude that $d_H(s_i, s_j) \leq d_G(s_i, s_j) + \texttt{err}$ for all $i, j \in [1, \ell]$. 
Now let $u, w$ be vertices in $\pi(s, s_{\ell})$ (where $u$ occurs before $w$ in $\pi(s, s_{\ell})$), and let $s_i, s_j$ be the vertices in $\pi(s, s_{\ell}) \cap U$ that are closest to  $u$ and closest to $w$, respectively. Observe that
\begin{align*}
    d_H(u, w) & \leq d_H(u, s_i) + d_H(s_i, s_j) + d_H(s_j, w) \\
    & \leq \log n \cdot d_G(u, s_i) + d_H(s_i, s_j) + \log n \cdot d_G(s_j, w) \\
    & \leq d_G(u, s_i) + d_H(s_i, s_j) + d_G(s_j, w) + 2r\log n \\
    & \leq d_G(u, s_i) + (d_G(s_i, s_j) + \texttt{err}) + d_G(s_j, w) + 2r\log n \\
    & \leq d_G(u, w) + \hat{r}/2 + 2r \log n \\
    & \leq d_G(u, w) + \hat{r}.
\end{align*}
Then by our choice of $x$, it follows that $\pi(s, s_{\ell})$ is a subpath of $\pi(s, x)$, so $S' \subseteq S$. 

By our construction of $S'$, we have the guarantee that $\sum_{i=1}^{\ell} O(|C(s_{i})|^{\alpha + \epsilon}) > \hat{r} / 2$. 
Moreover, we may assume that all clusters $C(s_i)$ are  large clusters, since as noted earlier we can travel through small clusters with zero error. This implies that $|C(s_i)| \geq r^2 \log^{-2} n$. 
Now we wish to lower bound the sum $\sum_{i=1}^{\ell} |C(s_i)|$. We may assume that $\alpha + \epsilon < 1$, and so by a convexity argument it can be seen that the sum $\sum_{i=1}^{\ell} |C(s_i)|$ is minimized (while subject to our guaranteed  inequality) when each large cluster has size $r^2 \log^{-2}n$. Let $q$ be the number of large clusters we pass through in $\pi(s, s_{\ell})$.  Then our guarantee becomes  $\sum_{i=1}^{\ell} O(|C(s_{i})|^{\alpha + \epsilon}) = q \cdot (r^2 \log^{-2}n)^{\alpha + \epsilon} > \hat{r} / 2$. Rearranging this inequality gives us that $q > \hat{r} \cdot r^{-2(\alpha + \epsilon)}$. Now note that since $\hat{r} = r \cdot n^{c\epsilon}$ for a sufficiently large constant $c$, we may assume that $\hat{r} \geq r^{1 + 2\epsilon} \log^2 n$ by taking $c > 2$. Now putting it all together, we see that
\[
\sum_{i=1}^{\ell}|C(s_i)| \geq q \cdot r^2 \log^{-2} n \geq \frac{\hat{r} \cdot r^2}{r^{2(\alpha + \epsilon)} \cdot \log^2 n} \geq r^{3 - 2\alpha} = n^{(3 - 2\alpha)/(6 - 4\alpha)} = \Theta(n^{1/2}),
\]
where the second to last equality follows from our choice of $r = n^{1/(6 - 4\alpha)}$. 

Now we finish the proof by showing that $|S'| = |\cup_{i=1}^{\ell} C(s_i)| = \Omega(\sum_{i=1}^{\ell}|C(s_i)|)$. Recall that $d_G(s_i, s_{i+1}) \geq r/2$. Then each vertex $v \in V$ can only occur in $O(1)$ distinct clusters $C(s_i)$, where $s_i \in U$, or else we contradict our assumption that $\pi(s, x)$ is a shortest path in $G$. We conclude that $|\cup_{i=1}^{\ell} C(s_i)| = \Omega(\sum_{i=1}^{\ell}|C(s_i)|)$, so $|S| \geq |S'| = \Omega(n^{1/2})$. A symmetric argument shows that $|T| = \Omega(n^{1/2})$, so it follows that $|S \times T| = \Omega(n)$. 
\end{proof}

 Each time we add an emulator edge to $H$ in the greedy phase of the construction,  $|S \times T| = \Omega(n)$ pairs of vertices in $V$ become connected in $H$ for the first time by Propositions \ref{prop:first_conn} and \ref{prop:s_t_size}. Then since there are $O(n^2)$ pairs of vertices in $V$, the greedy phase adds $O(n)$ edges to $H$. We conclude that we add $O(n)$ emulator edges to $H$ in the greedy phase, so the total size of $H$ is $O_{\epsilon}(n)$. 
In our construction we made use of the Chernoff bound only polynomially many times, so by the union bound our construction succeeds with high probability.  This completes the proof of Lemma \ref{lem:emu_recursion}. 
\end{proof}

Now we can repeatedly apply Lemma \ref{lem:emu_recursion} to obtain a sequence of improved emulator upper bounds. We choose our initial emulator to be a spanning tree with error $+n$ and $n-1$ edges.
Then after one application of Lemma \ref{lem:emu_recursion}, we obtain an additive emulator with $O_{\epsilon}(n)$ edges and error $+O(n^{1/(6 - 4) + \epsilon}) = +O(n^{1/2 + \epsilon})$. More generally, after $i$ applications of Lemma \ref{lem:emu_recursion}, we obtain an additive emulator with $O_{\epsilon}(n)$ edges and error $+O(n^{a_i + \epsilon})$, where $a_i$ is defined by the recurrence relation $a_0 = 1$, $a_{i+1} = 1/(6-4a_i)$. The value of $a_i$ converges to the fixed point $\frac{1}{3 + \sqrt{5}} \approx 0.191$.

\begin{theorem}
For every $n$-vertex graph $G = (V, E)$ and $\epsilon > 0$, there exists an emulator $H = (V, E')$ on $O_{\epsilon}(n)$ edges with error $+O(n^{\frac{1}{3 + \sqrt{5}}+ \epsilon})$. 
\end{theorem}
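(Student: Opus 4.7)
The plan is to apply Lemma \ref{lem:emu_recursion} iteratively, bootstrapping from a trivial linear-size emulator and successively improving the exponent of the additive error. Taking as a starting point any known linear-size emulator with exponent $\alpha_0 < 1$ (for example, a spanning tree is the natural base case with $\alpha_0 = 1$, handled as a boundary case, or alternatively one may use the $+O(n^{0.222})$ emulator of \cite{kogan2023new} with $\alpha_0 = 0.222$), I would define the sequence $a_0 = \alpha_0$ and $a_{i+1} = \tfrac{1}{6 - 4 a_i}$. A single invocation of Lemma \ref{lem:emu_recursion} on an $n$-vertex graph with a given emulator of exponent $a_i$ produces a new linear-size emulator with exponent $a_{i+1}$ (any extra $+\epsilon$ slack is absorbed into the output guarantee of the lemma itself). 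So after $i$ iterations one obtains a $+O(n^{a_i + \epsilon})$ linear-size additive emulator.

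The core of the argument is the fixed-point analysis of $f(x) = \tfrac{1}{6-4x}$. The fixed-point equation rearranges to $4x^2 - 6x + 1 = 0$, whose roots are $x = \tfrac{3 \pm \sqrt{5}}{4}$. The smaller root is the attracting one: $x^* = \tfrac{3 - \sqrt{5}}{4} = \tfrac{1}{3 + \sqrt{5}}$, after rationalizing, matching the exponent in the theorem. To show convergence I would verify that $f$ is monotone increasing on $[0, 3/2)$ with $f'(x^*) = \tfrac{4}{(6-4x^*)^2} = 4(x^*)^2 \approx 0.146 < 1$, so $x^*$ is attracting, and that $f$ maps a suitable interval around $x^*$ containing $\alpha_0$ into itself, yielding monotone convergence $a_i \to x^*$.

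Given any $\epsilon > 0$, I would choose $i = i(\epsilon)$ large enough that $a_i \leq x^* + \epsilon/2$, then apply Lemma \ref{lem:emu_recursion} exactly $i$ times (with the final application using slack $\epsilon/2$). The resulting emulator has $O_\epsilon(n)$ edges and additive error $+O(n^{x^* + \epsilon})$, which is exactly the claimed $+O(n^{1/(3 + \sqrt{5}) + \epsilon})$ bound.

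The main obstacle I would watch for is the accumulation of hidden constants across iterations: each application of Lemma \ref{lem:emu_recursion} multiplies both the $O_\epsilon(\cdot)$ edge constant and the $n^{O(\epsilon)}$ factor hidden in the error. Since $i(\epsilon)$ depends only on $\epsilon$ (not on $n$), the final constants are functions of $\epsilon$ alone and can legitimately be hidden in the $O_\epsilon(\cdot)$ notation, but this bookkeeping must be checked. A minor secondary subtlety is the base case: the convexity step in Lemma \ref{lem:emu_recursion} assumes $\alpha < 1$ strictly, so if one starts from the spanning tree $\alpha_0 = 1$ one must either verify that the first iteration still goes through (it does, since even a trivial size bound on $|S|$ suffices to take one step), or simply initialize from any previously known nontrivial linear-size emulator.
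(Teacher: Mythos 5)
Your proposal is correct and follows essentially the same route as the paper: iterate Lemma \ref{lem:emu_recursion} starting from a spanning tree (the paper uses $a_0 = 1$ directly), and observe that the recurrence $a_{i+1} = 1/(6-4a_i)$ converges to the attracting fixed point $\tfrac{3-\sqrt{5}}{4} = \tfrac{1}{3+\sqrt{5}}$, with the number of iterations depending only on $\epsilon$ so that all constants stay in $O_\epsilon(\cdot)$. Your additional care about the fixed-point stability and the $\alpha < 1$ boundary case is sound but not needed beyond what the paper already does.
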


We note that the dependency on $\varepsilon$ in the size of $H$ is roughly $O(1)^{1/\varepsilon}$ due to the clustering decomposition of \cite{BV16} used in the preprocessing phase.

\section{Acknowledgements}
I am very grateful to my advisor, Greg Bodwin, for many helpful discussions and for reviewing this paper.

\bibliographystyle{plain}
\bibliography{refs}

\end{document}